\documentclass[11pt,fleqn,a4paper]{article}

\usepackage{amsmath,amssymb,amsthm,enumerate}
\setlength{\textwidth}{160.0mm}
\setlength{\textheight}{240.0mm}
\setlength{\oddsidemargin}{0mm}
\setlength{\evensidemargin}{0mm}
\setlength{\topmargin}{-15mm}
\setlength{\parindent}{5.0mm}
\newcommand{\noprint}[1]{}

\flushbottom
\tolerance=3333
\allowdisplaybreaks

\newcommand{\todo}[1][\null]{\ensuremath{\clubsuit}}

\newcommand{\p}{\partial}
\newcommand{\const}{\mathop{\rm const}\nolimits}

\newtheorem{theorem}{Theorem}

\newtheorem{corollary}{Corollary}
\newtheorem{proposition}{Proposition}
{\theoremstyle{definition}

}

\begin{document}

\par\noindent {\LARGE\bf
Reduction operators of Burgers equation
\par}

{\vspace{4mm}\par\noindent {\large Oleksandr A.~Pocheketa$^\dag$ and~Roman O.~Popovych$^\ddag$}
\par\vspace{2mm}\par}

{\vspace{2mm}\par\noindent \it
$^\dag{}{}^\ddag$Institute of Mathematics of NAS of Ukraine, 3 Tereshchenkivska Str., 01601 Kyiv, Ukraine\par}

{\vspace{2mm}\par\noindent {\it
$^\ddag\phantom{^\dag}$Faculty of Mathematics, University of Vienna, Nordbergstra{\ss}e 15, A-1090 Vienna, Austria
}\par}
{\vspace{2mm}\noindent {E-mail: \it
$^\dag$pocheketa@yandex.ua, $^\ddag$rop@imath.kiev.ua }\par}

{\vspace{7mm}\par\noindent\hspace*{8mm}\parbox{144mm}{\small
The solution of the problem on reduction operators and nonclassical reductions of the Burgers equation
is systematically treated and completed.
A~new proof of the theorem on the special ``no-go'' case of regular reduction operators is presented,
and the representation of the coefficients of operators
in terms of solutions of the initial equation is constructed for this case.
All possible nonclassical reductions of the Burgers equation to single ordinary differential equations
are exhaustively described.
Any Lie reduction of the Burgers equation proves to be equivalent via the Hopf--Cole transformation
to a parameterized family of Lie reductions of the linear heat equation.
}\par\vspace{2mm}}

\noprint{
Keywords: Burgers equation, nonclassical symmetry, reduction operator, Lie symmetry, exact solution
}

\section{Introduction}

The second-order evolution equation
\begin{equation}\label{BE}
L[u]:=u_t+uu_x+u_{xx}=0
\end{equation}
was proposed by Burgers~\cite{burg39a,burg48a} as a one-dimensional turbulence model.
The equation~\eqref{BE} is also applied to model other phenomena
in physics, chemistry, mathematical biology, etc.
A fairly complete review of properties of the Burgers equation can be found in~\cite[Chapter~4]{whit74book}.

It is well known that the equation~\eqref{BE} is linearized to the heat equation~$v_t+v_{xx}=0$
using the so-called Hopf--Cole transformation $u=2v_x/v$~\cite[p.~102]{fors1906book}.
At the same time, due to the importance of the Burgers equation for various applications,
the exhaustive study of its properties in the framework of symmetry analysis is still topical.

Lie symmetries of the Burgers equation and~some of its generalizations were studied since the 1960s.
The maximal Lie invariance algebra~$\mathfrak g^{\rm B}$
of the equation~\eqref{BE} was first computed by Katkov~\cite{katk65a} in the course of
group classification of differential equations of the general form
$u_t+uu_x=(f(u)u_x)_x$.
The algebra~$\mathfrak g^{\rm B}$ is spanned by the vector fields
\begin{gather*}
\mathcal P_t=\partial_t,\quad
\mathcal D=2t\partial_t+x\partial_x-u\partial_u,\quad
\mathcal K=t^2\partial_t+tx\partial_x+(x-ut)\partial_u,\\
\mathcal P_x=\partial_x,\quad
\mathcal G=t\partial_x+\partial_u.
\end{gather*}
The complete point symmetry group~$G^{\rm B}$ of the equation~\eqref{BE} consists of the transformations
\begin{equation*}
\widetilde{t}=\frac{\alpha t+\beta}{\gamma t+\delta},\quad\widetilde{x}=\frac{\kappa x+\mu_1
t +\mu_0}{\gamma t+\delta},\quad\widetilde u=\frac{\kappa(\gamma t+\delta)u-\kappa\gamma
x+\mu_1\delta-\mu_0\gamma}{\alpha\delta-\beta\gamma},
\end{equation*}
where $(\alpha,\beta,\gamma,\delta,\kappa,\mu_0,\mu_1)$ is
an arbitrary set of constants defined up to a nonzero multiplier, and
$\alpha\delta-\beta\gamma=\kappa^2>0$.
Up to composition with continuous point symmetries,
the group~$G^{\rm B}$ contains the single discrete symmetry $(t,x,u)\rightarrow(t,-x,-u)$.

Generally, reductions of partial differential equations using their Lie symmetries
do not provide sufficiently large families of exact solutions of these equations.
The nonclassical method of reduction was proposed in~\cite{blum67th} (see also~\cite{blum69a}) in order to utilize
a wider class of vector fields than Lie symmetries.
Later such vector fields were called nonclassical symmetries~\cite{LeviWinternitz1989}
or conditional symmetries~\cite{fush93book,FushchychZhdanov92,zhda99a} or reduction operators~\cite{popo08b}.
The notion of nonclassical symmetries can be extended in several directions,
e.g., to the concept of weak symmetry introduced in~\cite{olve87a},
which is also closely related to compatibility theory of differential equations
and the general method of differential constraints~\cite{olve86a,olve94a,pucc92b,Sidorov&Shapeev&Yanenko1984,yane64a}.
Generalized notions of ansatzes and reductions associated with weak symmetries were intensively discussed,
e.g., in \cite{burd96a,olve87a,pucc00a,sacc04b}, see also references therein.

A \emph{reduction operator} of the equation~\eqref{BE} is a vector field of the general form
\begin{equation}\label{RO}
Q=\tau(t,x,u)\partial_t+\xi(t,x,u)\partial_x+\eta(t,x,u)\partial_u,
\end{equation}
where the coefficients~$\tau$ and~$\xi$ do not simultaneously vanish,
that allows one to construct an ansatz reducing the initial equation~\eqref{BE} to an ordinary differential equation.
See, e.g., \cite{zhda99a} for the general definition of involutive families of reduction operators.
Every Lie symmetry operator is a reduction operator.
The multiplication by nonvanishing functions of~$(t,x,u)$ generates an equivalence relation
on the set of reduction operators.
The determining equations on coefficients of a reduction operator~$Q$ are derived
from the conditional invariance criterion~\cite{FushchychZhdanov92,olve96book,zhda99a}
\begin{equation}\label{EqCondInvCriterion}
Q_{(2)}L[u]\!\mid_{\mathcal{L}\cap\mathcal{Q}^{(2)}}=0.
\end{equation}
Here $Q_{(2)}$ is the second prolongation of the vector field~$Q$,
$\mathcal{L}$ is the manifold in the second-order jet space $J^{(2)}$ that corresponds to the Burgers equation $L[u]=0$,
and $\mathcal{Q}^{(2)}$ is the manifold in the same jet space
determined by the invariant surface condition $Q[u]=0$
jointly with its differential consequences $D_tQ[u]=0$ and~$D_xQ[u]=0$,
$Q[u]=\eta-u_t-\xi u_x$ is the characteristic of the vector field~$Q$,
$D_t$ and~$D_x$ are the operators of total differentiation with respect to $t$ and $x$, respectively.
In view of the evolution kind of the Burgers equation it is natural to partition
the set of its reduction operators into two subsets, singular and regular,
depending on whether or not the coefficient~$\tau$ vanishes~\cite{kunz08b}.
Up to the above equivalence relation, one can assume $(\tau,\xi)=(0,1)$ and $\tau=1$
for singular and regular reduction operators of the Burgers equation, respectively.

It is the Burgers equation that was first considered from the nonclassical symmetry point of view
after the prominent paper~\cite{blum69a}.
Namely, in~\cite{wood71th,wood71a} the determining equations for regular reduction operators of~\eqref{BE} were
derived under the gauge $\tau=1$
and a few of their particular solutions satisfying the additional constraint $\xi_u=0$ were constructed.
The corresponding results are available in~\cite{ames72book}.
The determining equations for both regular and singular nonclassical symmetries of~\eqref{BE}
were presented in~\cite{pucc92a}. Therein the regular case was studied in detail under the gauge $\tau=1$,
for which the consideration was shown to be partitioned into three cases, $\xi_u=0$, $\xi_u=1$ and $\xi_u=-\frac12$.
The case $\xi_u=0$ proved to result merely in nonclassical symmetries which are equivalent to Lie symmetries.
(Within the framework of the direct method, the same result was earlier obtained in~\cite{clar89a}
in terms of the corresponding ansatzes and reductions.)
The unique reduction operator $\partial_t+u\partial_x$ satisfying the constraint $\xi_u=1$ was also found
and used for reducing the Burgers equation.
For the case $\xi_u=-\frac12$ some particular solutions of the determining equations jointly
with the corresponding ansatzes and invariant solutions of the Burgers equation were constructed.
The above consideration of regular nonclassical symmetries from~\cite{pucc92a} was extended in~\cite{arri93a}
with more particular solutions satisfying the constraint $\xi_u=-\frac12$.
Still wider families of particular solutions of the determining equations in this case
were given in~\cite{cher98e,olve96book}.
In~\cite{clar94a} an algorithmic procedure to derive determining equations for nonclassical symmetries was proposed,
and the Burgers equation was one of the illustrative examples for application of this procedure.

The system~$\mathcal S_{\rm B}$ of determining equations for the case~$\xi_u=-\frac12$
was not well investigated for a surprisingly long time
although the study of the analogous system~$\mathcal S_{\rm h}$
for regular reduction operators of the linear heat equation $v_t+v_{xx}=0$,
whose form is very similar to~$\mathcal S_{\rm B}$, had already been completed in~\cite{fush93book,fush92a}.
See also \cite{webb90a} for preliminary results on~$\mathcal S_{\rm h}$
and \cite{fush94a,popo95a,popo08b} for further generalizations
to (1+1)-dimensional second-order linear evolution equations.
The system~$\mathcal S_{\rm B}$ was first linearized in~\cite{mans99a} in a fashion similar to~\cite{fush92a}.
Namely, this system was reduced by a differential substitution
to the uncoupled system of three copies of the linear heat equation.
As shown in~\cite{arri02a}, the systems~$\mathcal S_{\rm B}$ and~$\mathcal S_{\rm h}$
as well as the substitutions linearizing them
can be interpreted in terms of the matrix Burgers equation and the matrix Hopf--Cole transformation.

The above in fact means that the case~$\xi_u=-\frac12$ can be referred~to as ``no-go''.
In general, looking for reduction operators in a family of vector fields is said to result in a ``no-go'' case
if the corresponding system of determining equations for coefficients of reduction operators
is reduced to a well-determined system whose general solution cannot be represented in a closed form,
and, moreover, solving this system is equivalent, in a certain sense, to solving the initial equation.

Singular reduction operators of the Burgers equation were in fact not studied until \cite{Popovych1998,zhda98a},
where no-go results of~\cite{fush92a} on reduction operators with $\tau=0$ for the linear heat equation
were extended to general evolution equations of order greater than one.
These results were treated in~\cite{kunz08b} within the framework of singular reduction operators.

In this paper we intend to enhance and complete the above results on nonclassical symmetries and reductions
of the Burgers equation.
In particular, extending methods from~\cite{popo08b} we present a new proof on the linearization
of the system~$\mathcal S_{\rm B}$
and show that solutions of this system are expressed via triples of solutions of the Burgers equation.
We first exhaustively describe all possible nonclassical reductions of the Burgers equation
to single ordinary differential equations
including reductions associated with the no-go case $\tau=1$ and $\xi_u=-\frac12$.
A part of the description is the assertion stating that any Lie reduction of the Burgers equation
is equivalent via the Hopf--Cole transformation
to a parameterized family of Lie reductions of the linear heat equation.

\section{Singular reduction operators}\label{NCSdeteqSECTION}

As the Burgers equation~\eqref{BE} is a (1+1)-dimensional second-order evolution equation,
every its reduction operator of the form~\eqref{RO} with $\tau=0$ is singular
since the corresponding reduced equation is of a lower (namely, the first) order~\cite{kunz08b}.
All basic results on such reduction operators of~\eqref{BE} follow from
the general results on singular reduction operators of co-order singularity one.
After setting $\xi=1$ in~$Q$ due to the equivalence of reduction operators,
the conditional invariance criterion~\eqref{EqCondInvCriterion} implies
a single determining equation on the single coefficient~$\eta=\eta(t,x,u)$,
\begin{equation}\label{EqDetSingularCase}
\eta_t+u\eta_x+\eta^2+\eta_{xx}+2\eta\eta_{xu}+\eta^2\eta_{uu}=0.
\end{equation}
The equation~\eqref{EqDetSingularCase} can be transformed to~\eqref{BE} \cite{kunz08b,zhda98a}.
Namely, the composition of the differential substitution $\eta=-\Phi_x/\Phi_u$ with $\Phi_u\ne0$,
where $\Phi$ is a smooth function of $(t,x,u)$,
and the hodograph transformation, where the new independent variables are $\tilde t=t$, $\tilde x=x$ and $\varkappa=\Phi$
and the new dependent variable is $\tilde u=u$, reduces the equation~\eqref{EqDetSingularCase} to
the initial equation~\eqref{BE} on the function $\tilde u=\tilde u(\tilde t,\tilde x,\varkappa)$
with $\varkappa$ playing the role of a parameter.
Moreover, up to equivalences of reduction operators and solution families,
there exists a bijection between one-parameter families of solutions of the equation~\eqref{BE}
and its reduction operators with zero coefficients of $\p_t$.
Namely, each operator of the above kind corresponds to
the family of solutions which are invariant with respect to this operator.
The problems of construction of all one-parameter solution families of~\eqref{BE}
and the exhaustive description of its reduction operators with zero coefficients of $\p_t$
are completely equivalent.
Given a family $\mathcal F=\{u=f(t,x,\varkappa)\}$ of solutions of~\eqref{BE}
parameterized by a single essential parameter~$\varkappa$,
the corresponding singular reduction operator is $Q=\p_x-(\Phi_x/\Phi_u)\p_u$,
where the function $\Phi$ is obtained by solving the equality $u=f(t,x,\varkappa)$
with respect to~$\varkappa$, $\varkappa=\Phi(t,x,u)$.
The ansatz $u=f(t,x,\varphi(\omega))$, where $\omega=t$, associated with~$Q$,
reduces the equation~\eqref{BE}  to the equation $\varphi_\omega=0$.
The simplicity of the reduced equation is explained by the specific choice
of the ansatz based on knowing the one-parameter family~$\mathcal F$ of solutions.

\section{Regular reduction operators}

Now we look for regular reduction operators of the Burgers equation~\eqref{BE},
which are of the form~\eqref{RO} with nonvanishing values of the coefficient~$\tau$.
Up to equivalence of reduction operators, for any regular operator~$Q$ we can set $\tau=1$.
In view of this gauge we do not need to use
the differential consequences in order to derive the determining equations,
i.e. it suffices to take into account only the equations $u_t+uu_x+u_{xx}=0$ and~$\eta-u_t-\xi u_x=0$
in the course of confining to the manifold $\mathcal{L}\cap\mathcal{Q}^{(2)}$
in the conditional invariance criterion~\eqref{EqCondInvCriterion}.
Substituting the expressions for~$u_t$ and~$u_{xx}$ obtained from these equations
into the differential function $Q_{(2)}L[u]$
and splitting the result with respect to $u_x$, we get
\begin{equation}\label{BE_RO_desystem1}
\begin{split}
&\xi_{uu}=0,\\
&-2\xi_{xu}-2\xi_u\xi+2u\xi_u+\eta_{uu}=0,\\
&2\eta_{xu}+2\xi_u\eta+\eta-\xi_t+u\xi_x-\xi_{xx}-2\xi_x\xi=0,\\
&\eta_t+u\eta_x+\eta_{xx}+2\xi_x\eta=0.
\end{split}
\end{equation}
Integrating the first two equations, we represent the functions $\xi$ and~$\eta$ as
polynomials of~$u$ with the coefficients depending on $t$ and $x$,
\begin{equation}\label{BE_RO_xieta}
\xi = \xi^1u + \xi^0,\quad
\eta=\frac13\xi^1\left(\xi^1-1\right)u^3+\left(\xi^1_x+\xi^1\xi^0\right)u^2+\eta^1u+\eta^0.
\end{equation}
Then we split the third equation of the system~\eqref{BE_RO_desystem1} with respect to~$u$
and get a system of differential equations on the functions $\xi^1$, $\xi^0$, $\eta^1$ and~$\eta^0$,
\begin{equation}\label{BE_RO_desystem2}
\begin{split}
&\xi^1(\xi^1-1)(2\xi^1+1)=0,\\
&\xi^1\xi^0(2\xi^1+1)+4\xi^1\xi^1_x=0,\\
&3\xi^1_{xx}+2(\xi^1_x\xi^0+\xi^1\xi^0_x)+(2\xi^1+1)\eta^1-\xi^1_t+\xi^0_x=0,\\
&\xi^0_t+2\xi^0\xi^0_x+\xi^0_{xx}-(2\xi^1+1)\eta^0-2\eta^1_x=0.
\end{split}
\end{equation}

The further consideration depends on the choice among the three possible solutions of the first equation
of the system~\eqref{BE_RO_desystem2}.
We rewrite the last equation of~\eqref{BE_RO_desystem1} in terms of $\xi^1$, $\xi^0$, $\eta^1$ and~$\eta^0$
and split it with respect to $u$ severally for each value of~$\xi^1$.

\subsection{Trivial case}

The case $\xi^1=1$ is rather simple.
The system~\eqref{BE_RO_desystem2} implies $\xi^0=\eta^1=\eta^0=0$.
The corresponding vector field $Q^1=\partial_t+u\partial_x$ is a unique reduction operator
(up to the equivalence relation) for the Burgers equation in this case.
The set of $Q^1$-invariant solutions consists of two families,
one of which is two-parameter and the other is one-parameter.
The two-parameter family is formed by the functions $u=(x+c_1)/(t+c_0)$, where~$c_1$ and~$c_0$ are arbitrary constants,
and all these solutions are Lie-invariant and equivalent to the scale-invariant solution $u=x/t$.
The elements of the one-parameter family are constant functions.
They are invariant with respect to shifts of both~$t$ and~$x$.

The optimal way for the construction of $Q^1$-invariant solutions of~\eqref{BE} is to integrate at first
the equation $L[u]+Q^1[u]=u_{xx}=0$, which gives the representation $u=\alpha(t)x+\beta(t)$
with some smooth functions~$\alpha$ and~$\beta$ of~$t$.
The generalized vector field $\hat Q=(L[u]+Q^1[u])\partial_u=u_{xx}\partial_u$,
is equivalent to the evolutionary representative $Q^1[u]\partial_u$ of~$Q^1$
on the set of solutions of~\eqref{BE} and hence
it is a generalized conditional symmetry of~\eqref{BE}, cf.~\cite[Proposition~4]{kunz11a}.
This is why the associated ansatz $u=\alpha(t)x+\beta(t)$ reduces the Burgers equation~\eqref{BE}
to a system of two ordinary differential equations with respect to the functions~$\alpha$ and~$\beta$,
\[
\alpha_t+\alpha^2=0,\quad
\beta_t+\alpha\beta=0.
\]

A complete set of functionally independent integrals of the equation $Q^1[u]=0$ consists of $x-ut$ and~$u$.
Therefore, directly with~$Q^1$ we construct the implicit ansatz
$u=\varphi(\omega)$, where $\omega=x-ut$ (resp.\ $x-ut=\varphi(\omega)$, where $\omega=u$)
for $Q^1$-invariant solutions with $x-ut\ne\const$ (resp.\ $u\ne\const$).
For both the ansatzes the associated reduced equations take the form $\varphi_{\omega\omega}=0$.

\subsection{Case related to Lie symmetries}

The case $\xi^1=0$
is discussed in~\cite{pucc92a} in detail (see also~\cite{arri93a}),
where it is noted that all corresponding solutions of the Burgers equation are Lie-invariant.
The system~\eqref{BE_RO_desystem2} with~$\xi^1=0$ implies
$\eta^1=-\xi^0_x$ and~$\eta^0=\xi^0_t+2\xi^0\xi^0_x+3\xi^0_{xx}$,
and splitting the last equation~\eqref{BE_RO_desystem1} with respect to~$u$ we derive
\begin{equation*}
\xi^0_{xx}=0,\qquad
\xi^0_{tt}+2\xi^0\xi^0_{tx}+4\xi^0_t\xi^0_x+4\xi^0(\xi^0_x)^2=0.
\end{equation*}
Then we have $\xi^0=\xi^{01}(t)x+\xi^{00}(t)$,
where the coefficients~$\xi^{01}$ and~$\xi^{00}$ satisfy the system
\begin{equation*}
\xi^{01}_{tt}+6\xi^{01}\xi^{01}_t+4(\xi^{01})^3=0,\qquad
\xi^{00}_{tt}+4\xi^{01}\xi^{00}_t+2\xi^{01}_t\xi^{00}+4(\xi^{01})^2\xi^{00}=0.
\end{equation*}
The transformation $\xi^{01}=\alpha_t/2\alpha$, $\xi^{00}=\beta/\alpha$ maps this system
to the system of two simple uncoupled equations $\alpha_{ttt}=0$ and~$\beta_{tt}=0$
for the functions $\alpha=\alpha(t)$ and~$\beta=\beta(t)$. Hence
\begin{equation*}
\xi^{01}=\frac{c_2t+c_1}{c_2t^2+2c_1t+c_0},\qquad
\xi^{00}=\frac{c_4t+c_3}{c_2t^2+2c_1t+c_0},
\end{equation*}
where $c_0$, \dots, $c_4$ are arbitrary constants such that $(c_0,c_1,c_2)\ne(0,0,0)$.
The substitution of the expressions obtained for $\eta^1$, $\eta^0$ and $\xi^0=\xi^{01}x+\xi^{00}$
into~\eqref{BE_RO_xieta} gives
\begin{equation*}
Q=\partial_t+\frac{(c_2t+c_1)x+c_4t+c_3}{c_2t^2+2c_1t+c_0}\partial_x
+\frac{-(c_2t+c_1)u+c_2x+c_4}{c_2t^2+2c_1t+c_0}\partial_u.
\end{equation*}
It is easy to see that the operator~$Q$ differs from a Lie symmetry operator of the Burgers equation
by the multiplier $(c_2t^2+2c_1t+c_0)^{-1}$. Thus, the following assertion is proved.

\begin{proposition}
There is a bijection between reduction operators of the Burgers equation
of the general form \[Q=\partial_t+\xi(t,x)\partial_x+\left(\eta^1(t,x)u+\eta^0(t,x)\right)\partial_u\]
and one-dimensional algebras spanned by Lie symmetry operators of this equation with nonzero coefficients of~$\p_t$.
The bijection is established by the equivalence relation of reduction operators.
\end{proposition}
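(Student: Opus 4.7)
The plan is to read off the bijection directly from the parametric form of~$Q$ derived immediately above. First, I would verify the key identity
\[
\bigl(c_2 t^2+2c_1t+c_0\bigr)\,Q \;=\; X,\qquad X:=c_0\mathcal P_t+c_1\mathcal D+c_2\mathcal K+c_3\mathcal P_x+c_4\mathcal G,
\]
by substituting the coordinate expressions for the basis of~$\mathfrak g^{\rm B}$ into the right-hand side and matching the coefficients of $\partial_t$, $\partial_x$ and $\partial_u$ with those of the left-hand side. This is a short symbolic check, already implicit in the sentence preceding the proposition.

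With that identity in hand, I would spell out the two directions of the bijection. In the forward direction, the derivation leading to the displayed formula for~$Q$ associates to any reduction operator~$Q$ of the stated form a tuple $(c_0,\dots,c_4)$ with $(c_0,c_1,c_2)\neq(0,0,0)$; the above identity then presents~$Q$ as the product of the nowhere vanishing factor $(c_2t^2+2c_1t+c_0)^{-1}$ (which depends only on~$t$) with the Lie symmetry operator~$X\in\mathfrak g^{\rm B}$, whose coefficient of~$\partial_t$ is nonzero. The one-dimensional subalgebra $\langle X\rangle$ is uniquely determined by~$Q$, because simultaneously rescaling all~$c_i$ by a common nonzero constant leaves~$Q$ unchanged. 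In the inverse direction, starting from a one-dimensional subalgebra $\langle X\rangle\subset\mathfrak g^{\rm B}$ with nonzero coefficient of~$\partial_t$, I would divide a representative~$X$ by that coefficient to produce an operator of the stated form; changing the representative by a nonzero scalar does not affect the quotient, and the resulting vector field is a reduction operator since~$X$ is one and multiplication by a nonvanishing function preserves the equivalence class.

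To conclude, I would observe that the two procedures are mutually inverse, and that in each case the passage between~$Q$ and~$X$ consists of multiplication by a nonvanishing function of~$(t,x,u)$, so the bijection is indeed realized by the equivalence relation of reduction operators. The only step with any real content is the coefficient-matching establishing the displayed identity; after that, the argument is pure bookkeeping, since the heavy lifting---the classification of all reduction operators in the case $\xi^1=0$---has already been carried out in the preceding analysis of the system~\eqref{BE_RO_desystem2}. There is accordingly no serious obstacle: the only thing to be careful about is to notice that the multiplier $c_2t^2+2c_1t+c_0$ depends only on~$t$ (so that it is automatically nonvanishing as a function on the $(t,x,u)$-space as soon as it is not the zero polynomial in~$t$, which is precisely the condition $(c_0,c_1,c_2)\neq(0,0,0)$).
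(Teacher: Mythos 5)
Your proposal is correct and follows essentially the same route as the paper, which proves the proposition simply by the preceding classification of the case $\xi^1=0$ together with the observation that $Q$ differs from the Lie symmetry operator $c_0\mathcal P_t+c_1\mathcal D+c_2\mathcal K+c_3\mathcal P_x+c_4\mathcal G$ by the multiplier $(c_2t^2+2c_1t+c_0)^{-1}$; your explicit coefficient check and the two-way bookkeeping just make that argument explicit. One small caveat: a nonzero quadratic $c_2t^2+2c_1t+c_0$ may still have real roots, so ``nonvanishing'' should be understood on the domain where $Q$ is defined (as the paper implicitly does), not globally on the whole $(t,x,u)$-space.
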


In other words, any nonclassical reduction of the Burgers equation
with respect to a vector field with $\tau=1$ and $\xi_u=0$ is in fact a Lie reduction.
At the same time, any Lie solution of the Burgers equation is obtained
from a Lie solution of the linear heat equation via the Hopf--Cole transformation.
Before presenting the corresponding rigorous assertion, we recall that
the maximal Lie invariance algebra~$\mathfrak g^{\rm h}$ of the linear heat equation $v_t+v_{xx}=0$
is spanned by the vector fields
\begin{gather*}
\hat{\mathcal P}_t=\partial_t,\quad
\hat{\mathcal D}=2t\partial_t+x\partial_x,\quad
\hat{\mathcal K}=t^2\partial_t+tx\partial_x+\left(\frac14x^2-\frac12t\right)v\partial_v,\\
\hat{\mathcal P}_x=\partial_x,\quad
\hat{\mathcal G}=t\partial_x+\frac12xv\partial_v,\quad
\hat{\mathcal I}=v\partial_v,\quad
h(t,x)\partial_v,
\end{gather*}
where $h(t,x)$ runs through the set of solutions of this equation.
We associate any vector field
$Q=c_0\mathcal P_t+c_1\mathcal D+c_2\mathcal K+c_3\mathcal P_x+c_4\mathcal G$
from the maximal Lie invariance algebra~$\mathfrak g^{\rm B}$ of the equation~\eqref{BE}
with the vector field
$\hat Q=c_0\hat{\mathcal P}_t+c_1\hat{\mathcal D}+c_2\hat{\mathcal K}+c_3\hat{\mathcal P}_x+c_4\hat{\mathcal G}$
from $\mathfrak g^{\rm h}$.

\begin{proposition}
A solution~$u$ of the Burgers equation~\eqref{BE} is invariant with respect to a vector field~$Q$
from $\mathfrak g^{\rm B}$
if and only if $u=2v_x/v$ for some $\hat Q_\mu$-invariant solution~$v$ of the linear heat equation $v_t+v_{xx}=0$,
where $\mu $ is a constant and $\hat Q_\mu=\hat Q-\mu\hat{\mathcal I}\in\mathfrak g^{\rm h}$.
\end{proposition}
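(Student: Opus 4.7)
The plan is to exploit the Hopf--Cole transformation $u=2v_x/v$ to translate invariance conditions between~\eqref{BE} and the heat equation. Since the correspondence $Q\mapsto\hat Q$ is $\mathbb R$-linear, I would work with a general element $Q=c_0\mathcal P_t+c_1\mathcal D+c_2\mathcal K+c_3\mathcal P_x+c_4\mathcal G$ and its counterpart $\hat Q$. Writing $Q=\tau(t)\partial_t+\xi(t,x)\partial_x+\eta(t,x,u)\partial_u$ and $\hat Q=\tau(t)\partial_t+\xi(t,x)\partial_x+\rho(t,x)v\partial_v$, the explicit polynomial formulas for the coefficients immediately yield the key algebraic identity
\[
\eta=2\rho_x-\xi_x u,\qquad\tau_x=0,
\]
which will serve as the bridge between the two invariance conditions. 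Both sides are linear in the $c_i$, so it suffices to verify this identity on the five generators.

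For the implication ($\Leftarrow$), I would assume $v_t+v_{xx}=0$ and $\tau v_t+\xi v_x=(\rho-\mu)v$. Then $u=2v_x/v$ automatically satisfies~\eqref{BE}, and a direct computation gives
\[
\tau u_t+\xi u_x=\frac{2(\tau v_{tx}+\xi v_{xx})}{v}-\frac{u(\tau v_t+\xi v_x)}{v}.
\]
Differentiating the invariance of~$v$ in~$x$ (using $\tau_x=0$) gives $\tau v_{tx}+\xi v_{xx}=\rho_x v+(\rho-\mu-\xi_x)v_x$; substituting back, the $(\rho-\mu)$-terms cancel, leaving $\tau u_t+\xi u_x=2\rho_x-\xi_x u=\eta$, so $u$ is $Q$-invariant.

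For the implication ($\Rightarrow$), I would start with a $Q$-invariant solution~$u$ of~\eqref{BE} and pick a Hopf--Cole potential, i.e., a solution $v$ of $v_t+v_{xx}=0$ with $u=2v_x/v$; such $v$ is determined up to a positive multiplicative constant. Set $F:=\tau v_t+\xi v_x-\rho v$; it suffices to show that $F/v$ is constant. From $v_t/v=-u_x/2-u^2/4$ and~\eqref{BE} one obtains $(v_t/v)_x=-(u_{xx}+uu_x)/2=u_t/2$, whence
\[
\partial_x(F/v)=\tfrac12\bigl(\tau u_t+\xi u_x+\xi_x u-2\rho_x\bigr)=0
\]
by $Q$-invariance combined with the key identity. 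Thus $F=G(t)v$ for some function $G$. On the other hand, because $\hat Q$ is a classical point symmetry of the linear heat equation, its evolutionary characteristic $\rho v-\tau v_t-\xi v_x=-F$ is itself a solution of the heat equation, so $F_t+F_{xx}=0$. Substituting $F=Gv$ yields $G'(t)v=0$, forcing $G$ to be a constant; setting $\mu:=-G$ gives the desired $\hat Q_\mu$-invariance of~$v$.

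The main technical obstacle is the algebraic identity relating $\eta$ to $\rho_x$ and $\xi_x u$; once it is in hand, both directions collapse. On the reverse side the decisive observation is that $F$ itself satisfies the heat equation, because point symmetries of a linear equation produce evolutionary characteristics that are again solutions, which reduces the remaining step to the trivial ODE $G'=0$ and thereby pins down the value of~$\mu$.
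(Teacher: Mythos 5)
Your proof is correct and takes essentially the same route as the paper: your identity $\eta=2\rho_x-\xi_x u$ is just the coordinate form of the paper's key relation $Q[2v_x/v]=2(\hat Q[v]/v)_x$, and your observation that $F=-\hat Q[v]$ again solves the heat equation (forcing $G'=0$ and hence $\mu$ constant) is precisely the paper's commutation computation $\mathcal T\hat Q[v]=\hat Q[\mathcal T v]-2(c_2t+c_1)\mathcal T v$ in disguise. No gaps.
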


\begin{proof}
If smooth functions~$u$ and~$v$ depending on~$t$ and~$x$ are related via the Hopf--Cole transformation, $u=2v_x/v$, then
$Q[u]=Q[2v_x/v]=2(\hat Q[v]/v)_x.$

Suppose that $u$ is a $Q$-invariant solution of~\eqref{BE}.
Then the function~$v$ can be assumed to be a solution of the linear heat equation $v_t+v_{xx}=0$, and
$(\hat Q[v]/v)_x=Q[u]/2=0$, i.e. $\hat Q[v]=\mu v$ for some smooth function $\mu$ of~$t$.
Acting by the operator $\mathcal T=D_t+D_{xx}$ on both the sides of the last equation,
we derive
\[
\mathcal T\hat Q[v]=\hat Q[\mathcal Tv]-2(c_2t+c_1)\mathcal Tv=\mu_tv+\mu\mathcal Tv.
\]
In view of $\mathcal Tv=0$ and $\mathcal T\hat Q[v]=0$, this implies that $\mu_t=0$, i.e., $\mu$ is a constant.
As then $\hat Q[v]-\mu v=\hat Q_\mu[v]=0$, the function~$v$ is a $\hat Q_\mu$-invariant.

Conversely, if for some constant~$\mu$ the function~$v$ is
a $\hat Q_\mu$-invariant solution of the linear heat equation $v_t+v_{xx}=0$
then the function $u=2v_x/v$ is a solution of the Burgers equation~\eqref{BE} and
$Q[u]=Q[2v_x/v]=2(\hat Q[v]/v)_x=2(\hat Q_\mu[v]/v)_x=0$, i.e.,
the function~$u$ is $Q$-invariant.
\end{proof}

\subsection{No-go case}

The case $\xi^1=-\frac12$
leads to reduction operators of the general form
\begin{equation}\label{QforCase12}
Q=\partial_t+\left(-\frac{1}{2}u+\xi^0\right)\partial_x
+\left(\frac14u^3-\frac{\xi^0}{2}u^2+\eta^1u+\eta^0\right)\partial_u,
\end{equation}
where the coefficients $\xi^0$, $\eta^1$ and~$\eta^0$ are smooth functions of~$t$ and~$x$ satisfying
the system of differential equations
\begin{gather}\label{BE_RO_desystem_case12}
\begin{split}
&\xi^0_t+2\xi^0\xi^0_x+\xi^0_{xx}-2\eta^1_x=0,\\
&\eta^1_t+2\xi^0_x\eta^1+\eta^1_{xx}+\eta^0_x=0,\\
&\eta^0_t+2\xi^0_x\eta^0+\eta^0_{xx}=0
\end{split}
\end{gather}
derived from~\eqref{BE_RO_desystem2}.
As a differential substitution reduces the system~\eqref{BE_RO_desystem_case12}
to an uncoupled system of three copies of the linear heat equation~\cite{arri02a,mans99a},
the general solution of~\eqref{BE_RO_desystem_case12} cannot be represented in a closed form.
Hence the case~$\xi^1=-\frac12$ is called a ``no-go'' case.
This result appears to directly follow from the fact that $Q$ is a reduction operator of the equation~\eqref{BE}.
Moreover, we show that solutions of the system~\eqref{BE_RO_desystem_case12} can be represented
via solutions of the uncoupled system of three copies of the Burgers equation.

\begin{theorem}\label{TheoremOnNogoCaseofRedPosOfBurgersEq}
Any solution of the determining system~\eqref{BE_RO_desystem_case12}
on the coefficients of reduction operators of the form~\eqref{QforCase12} is represented as
\begin{equation}\label{BE_RO_case12_coeffs_via_phi}
\xi^0=\frac{(W(\bar v))_x}{W(\bar v)},
\quad
\eta^1=\displaystyle\frac{|\bar v,\bar v_{xx},\bar v_{xxx}|}{W(\bar v)},
\quad
\eta^0=-2\displaystyle\frac{W(\bar v_x)}{W(\bar v)},
\end{equation}
where $\bar v=(v^1,v^2,v^3)$ is a triple of linear independent solutions of the heat equation $v_t+v_{xx}=0$,
$W(\bar v)=|\bar v,\bar v_x,\bar v_{xx}|$ and~$W(\bar v_x)=|\bar v_x,\bar v_{xx},\bar v_{xxx}|$
are the Wronskians of this triple
and the triple of the corresponding derivatives with respect to~$x$, respectively,
and $|\bar p,\bar q,\bar r|$ denotes
the determinant of the matrix constructed from ternary columns~$\bar p$, $\bar q$ and~$\bar r$.
Conversely, any triple~$(\xi^0,\eta^1,\eta^0)$ admitting the representation~\eqref{BE_RO_case12_coeffs_via_phi}
satisfies the system~\eqref{BE_RO_desystem_case12}.
\end{theorem}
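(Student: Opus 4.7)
My plan is to recognize the nonlinear system~\eqref{BE_RO_desystem_case12} as the Frobenius-type compatibility condition for an overdetermined linear system that pairs the heat equation with a third-order linear ODE in~$x$, and then to read off the representation~\eqref{BE_RO_case12_coeffs_via_phi} as Vieta-type formulas for the coefficients of that ODE in terms of its fundamental system of solutions. With each triple $(\xi^0,\eta^1,\eta^0)$ I associate the monic third-order operator in~$x$ (with~$t$ treated as a parameter)
\[
\mathcal D := \partial_x^3 - \xi^0\partial_x^2 + \eta^1\partial_x + \tfrac12\eta^0.
\]
The choice of~$\mathcal D$ is dictated by the Hopf--Cole substitution $u=2v_x/v$: a short direct computation shows that on solutions of the Burgers equation the characteristic of the vector field~\eqref{QforCase12} reduces to $Q[u]=2v^{-1}\mathcal Dv$, so the $Q$-invariant Burgers solutions of Hopf--Cole form correspond exactly to heat-equation solutions~$v$ lying in $\ker\mathcal D$.

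The crux of the proof is to check that the overdetermined linear system $\mathcal Lv:=v_t+v_{xx}=0$, $\mathcal Dv=0$ is compatible if and only if $(\xi^0,\eta^1,\eta^0)$ satisfies~\eqref{BE_RO_desystem_case12}. Concretely, I differentiate $\mathcal Dv=0$ with respect to~$t$, eliminate every $t$-derivative of~$v$ by means of $v_t=-v_{xx}$ and its $x$-prolongations, and finally reduce modulo $\mathcal Dv=0$ by substituting $v_{xxx}=\xi^0v_{xx}-\eta^1v_x-\tfrac12\eta^0v$ together with the induced expressions for $v_{xxxx}$ and $v_{xxxxx}$. The outcome has the form $A\,v_{xx}+B\,v_x+C\,v$, and the claim is that the coefficients $A$, $B$ and $\tfrac12 C$ coincide (up to sign) with the left-hand sides of the three equations of~\eqref{BE_RO_desystem_case12}. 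This purely algebraic identity in derivatives of~$v$ is the principal technical step of the proof; it is routine but tedious, and it is where the particular shape of~\eqref{BE_RO_desystem_case12} is forced by the structure of~$\mathcal D$ and the heat operator.

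Once compatibility is established, standard theory supplies a three-dimensional linear space of simultaneous solutions of $\mathcal Lv=0$ and $\mathcal Dv=0$: at a fixed $t=t_0$ the ODE $\mathcal Dv=0$ has a three-dimensional solution space, and evolving each basis element by the heat equation preserves $\ker\mathcal D$ thanks to the compatibility just proved. Choose such a basis $\bar v=(v^1,v^2,v^3)$; since $\mathcal D$ is the unique monic third-order linear ODE in~$x$ whose fundamental system is~$\bar v$, expanding
\[
\left|\begin{array}{cccc} v & v_x & v_{xx} & v_{xxx} \\ v^1 & v^1_x & v^1_{xx} & v^1_{xxx} \\ v^2 & v^2_x & v^2_{xx} & v^2_{xxx} \\ v^3 & v^3_x & v^3_{xx} & v^3_{xxx} \end{array}\right|=0
\]
along the first row and dividing by $-W(\bar v)$ produces precisely~\eqref{BE_RO_case12_coeffs_via_phi}, the formula for~$\xi^0$ relying on the elementary identity $W(\bar v)_x=|\bar v,\bar v_x,\bar v_{xxx}|$ (the other two minors in the differentiation of $W(\bar v)$ vanish as determinants with repeated columns). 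The converse direction is then immediate: given three linearly independent heat solutions~$\bar v$ and $(\xi^0,\eta^1,\eta^0)$ defined by~\eqref{BE_RO_case12_coeffs_via_phi}, one has $\mathcal Dv^i=\mathcal Lv^i=0$ for $i=1,2,3$, and running the compatibility calculation in reverse forces $A$, $B$, $C$ to vanish on a three-dimensional family of~$v$'s and hence identically, so~\eqref{BE_RO_desystem_case12} holds. The main obstacle is the bookkeeping of the reduction in the middle step; beyond that the argument reduces to linear algebra together with standard ODE existence theory.
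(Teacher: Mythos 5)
Your proposal is correct in substance, but it reaches the theorem by a genuinely different route than the paper. You prove directly, by an explicit computation, that the determining system~\eqref{BE_RO_desystem_case12} is precisely the compatibility condition of the overdetermined linear pair $v_t+v_{xx}=0$, $\mathcal Dv=0$ with $\mathcal D=\partial_x^3-\xi^0\partial_x^2+\eta^1\partial_x+\frac12\eta^0$: indeed, applying $D_t+D_x^2$ to $\mathcal Dv$ and reducing modulo the two equations yields $Av_{xx}+Bv_x+Cv$ with $A$, $B$, $2C$ equal (up to sign) to the left-hand sides of~\eqref{BE_RO_desystem_case12} (your ``$\tfrac12C$'' is only a harmless factor-of-two slip), and your identity $L[u]+Q[u]=2\,\mathcal Dv/v$ under $u=2v_x/v$ is also correct. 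The paper instead avoids any compatibility computation: it invokes the known fact that a regular reduction operator of a second-order evolution equation has a two-parameter family of invariant solutions, uses the Hopf--Cole map of the system $L[u]=0$, $L[u]+Q[u]=0$ to conclude that the joint solution space of~\eqref{BE_case12_3OLDE} has dimension exactly three, and then applies Cramer's rule, which is your $4\times4$ determinant expansion in different clothing. Your approach buys self-containedness (no appeal to the general theory of reduction operators) and makes the converse direction transparent, since $A=B=C=0$ follows from evaluating the reduced expression on a fundamental system with nonvanishing Wronskian; the paper's approach buys brevity and, as its conclusion stresses, a proof ``directly based on properties of reduction operators,'' which the authors view as the template for generalizations. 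One point you should tighten: the three-dimensionality of the joint solution space should not be justified by ``evolving a basis of $\ker\mathcal D(t_0)$ by the heat equation,'' since the Cauchy problem for $v_t=-v_{xx}$ with arbitrary data is ill posed; the clean argument is the one your opening sentence already suggests, namely to close the pair into the first-order Frobenius system for $(v,v_x,v_{xx})$ in both $t$ and $x$, whose integrability condition is exactly $A=B=C=0$, and to apply the Frobenius theorem, which also guarantees the nonvanishing of the Wronskian needed to divide by $W(\bar v)$.
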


\begin{proof}
We fix an operator~$Q$ of the form~\eqref{QforCase12}.
The set of $Q$-invariant solutions of the Burgers equation $L[u]=0$
coincides with the set of solutions of the system $L[u]=0$ and $Q[u]=0$,
and it is parameterized by two arbitrary constants
as $Q$ is a regular reduction operator of the Burgers equation~\cite{kunz08b}.
For convenience we recombine the equations of the above system in the following way: $L[u]=0$, $L[u]+Q[u]=0$.
The Hopf--Cole transformation $u=2v_x/v$ maps this system to the linear system
\begin{equation}\label{BE_case12_3OLDE}
v_t+v_{xx}=0, \qquad
v_{xxx}-\xi^0v_{xx}+\eta^1v_x+\frac12\eta^0v=0.
\end{equation}
Let for some integer~$n$ functions $v^1$,~\dots,~$v^n$ of~$t$ and~$x$ be
linear independent solutions of the system~\eqref{BE_case12_3OLDE}.
Then the equation
\begin{equation}\label{EqQinvSolutions12OfBE}
u=2\frac{c_1v^1_x+\dots+c_nv^n_x}{c_1v^1+\dots+c_nv^n}
\end{equation}
where $c_1$,~\dots,~$c_n$ are arbitrary constants which are not simultaneously zero,
defines a family of $Q$-invariant solutions of the Burgers equation parameterized by $n-1$ essential constants,
and hence $n\leqslant3$ as the number of such parameters cannot be greater than two.
In other words, the dimension of the space~$V$ of solutions of the system~\eqref{BE_case12_3OLDE}
does not exceed three.
This dimension cannot also be less than three.
Indeed, let now the functions $v^1$,~\dots,~$v^n$ of~$t$ and~$x$ form a basis of the space~$V$, where $n=\dim V$.
Then the expression~\eqref{EqQinvSolutions12OfBE} represents the general solution of the system $L[u]=0$ and $Q[u]=0$,
which contains $n-1$ essential constant parameters. Therefore, $n-1=2$, i.e.~$n=3$.

Consider a basis $\{v^1,v^2,v^3\}$ of the space~$V$.
By definition, the elements of~$V$ are solutions of the heat equation $v_t+v_{xx}=0$, which is linear and evolutionary.
Hence the usual linear independence of them implies the linear independence of them
over the ring of smooth functions of~$t$,
i.e.\ the Wronskian~$W(\bar v)$ of the functions~$v^1$, $v^2$ and $v^3$ with respect to~$x$ does not vanish.
See, e.g., Note~5 in~\cite{popo08b}.
Substituting the elements of the basis into the second equation of~\eqref{BE_case12_3OLDE},
we obtain a well-defined system of linear algebraic equations,
\begin{equation*}
v^i_{xxx}-\xi^0v^i_{xx}+\eta^1v^i_x+\frac12\eta^0v^i=0, \quad i=1,2,3,
\end{equation*}
for the coefficients~$\xi^0$, $\eta^1$, $\eta^0$,
or, in the matrix form,
$M\bar{q}=\bar v_{xxx}$,
where
\begin{equation*}
\bar v=
\left(
\begin{array}{c}
v^1\\ v^2\\ v^3
\end{array}
\right),
\quad
M=
\left(
\begin{array}{ccc}
v^1&v^1_x&v^1_{xx}\\
v^2&v^2_x&v^2_{xx}\\
v^3&v^3_x&v^3_{xx}
\end{array}
\right),
\quad
\bar{q}=\left(
\begin{array}{c}
-\frac12\eta^0\\ -\eta^1\\ \xi^0
\end{array}
\right).
\end{equation*}
Solving this system with respect to~$\xi^0$, $\eta^1$ and~$\eta^0$,
we derive the representation~\eqref{BE_RO_case12_coeffs_via_phi}.

As the proof can be turned back, the inverse statement is also true.
\end{proof}

\begin{corollary}
The coefficients of the reduction operator~\eqref{QforCase12} of the Burgers equation
can be represented in the form
\begin{gather}\label{BE_RO_case12_coeffs}
\xi^0=\frac{1}{2}
\frac{|\bar e,\bar u,\bar z|}
{|\bar e,\bar u,\bar y|},
\quad
\eta^1=\frac{1}{4}
\frac{|\bar e,\bar y,\bar z|}
{|\bar e,\bar u,\bar y|},
\quad
\eta^0=-\frac{1}{4}
\frac{|\bar u,\bar y,\bar z|}
{|\bar e,\bar u,\bar y|},
\end{gather}
where the columns~$\bar e $, $\bar y$ and~$\bar z$ consist of
three units, the expressions $y^i=2u^i_x+(u^i)^2$ and~$z^i=4u^i_{xx}+6u^iu^i_x+(u^i)^3$, respectively, $i=1,2,3$,
and $\bar u$ is a column of three solutions of the Burgers equation with $|\bar e,\bar u,\bar y|\ne0$.
\end{corollary}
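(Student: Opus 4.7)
The plan is to deduce the Corollary directly from Theorem~\ref{TheoremOnNogoCaseofRedPosOfBurgersEq} by translating the Wronskian-style representation of $(\xi^0,\eta^1,\eta^0)$ in terms of heat-equation solutions~$v^i$ into one in terms of Burgers-equation solutions~$u^i$ via the Hopf--Cole transformation $u^i=2v^i_x/v^i$. The key observation is that for each~$i$ one has $v^i_x=\tfrac12 u^iv^i$, and differentiating this relation twice more in~$x$ (using the definitions of $y^i$ and~$z^i$) yields
\[
v^i_x=\tfrac12 u^iv^i,\qquad v^i_{xx}=\tfrac14 y^iv^i,\qquad v^i_{xxx}=\tfrac18 z^iv^i.
\]
Hence the $3\times 3$ matrices whose columns are any three of $\bar v,\bar v_x,\bar v_{xx},\bar v_{xxx}$ factor as $\mathrm{diag}(v^1,v^2,v^3)$ times a matrix whose columns are taken from $\bar e,\tfrac12\bar u,\tfrac14\bar y,\tfrac18\bar z$.

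Next, I would rewrite $\xi^0$ using the standard identity for the derivative of a $3\times 3$ Wronskian: since differentiating column~$j$ of $|\bar v,\bar v_x,\bar v_{xx}|$ for $j=1,2$ produces repeated columns, one gets $(W(\bar v))_x=|\bar v,\bar v_x,\bar v_{xxx}|$. With this, all three coefficients in Theorem~\ref{TheoremOnNogoCaseofRedPosOfBurgersEq} are expressed as ratios of $3\times 3$ determinants whose columns are drawn from $\{\bar v,\bar v_x,\bar v_{xx},\bar v_{xxx}\}$. Substituting the factored forms above, I can pull $v^1v^2v^3$ outside as a common factor from the rows of both numerator and denominator and extract the scalar factors $\tfrac12,\tfrac14,\tfrac18$ from the appropriate columns. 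The factors $v^1v^2v^3$ cancel identically in each ratio, leaving exactly the numerical prefactors $\tfrac12,\tfrac14,-\tfrac14$ appearing in~\eqref{BE_RO_case12_coeffs}.

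For the converse direction and the non-degeneracy condition, I would remark that $W(\bar v)=\tfrac18 v^1v^2v^3\,|\bar e,\bar u,\bar y|$, so that linear independence of $v^1,v^2,v^3$ (equivalently, $W(\bar v)\ne 0$) on the domain where the $v^i$ are nonvanishing translates precisely into the condition $|\bar e,\bar u,\bar y|\ne 0$ stated in the corollary. Conversely, given any triple~$(u^1,u^2,u^3)$ of solutions of the Burgers equation with $|\bar e,\bar u,\bar y|\ne 0$, we can reconstruct~$v^i$ (up to an inessential multiplicative constant in each) by integrating the Hopf--Cole relation, obtain three linearly independent solutions of the heat equation, and invoke the converse part of Theorem~\ref{TheoremOnNogoCaseofRedPosOfBurgersEq}.

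I do not expect a real obstacle; the argument is essentially a bookkeeping of Hopf--Cole derivatives together with one elementary Wronskian identity. The only point that warrants a brief comment is the identification $(W(\bar v))_x=|\bar v,\bar v_x,\bar v_{xxx}|$, and the verification that the three numerical coefficients $\tfrac12,\tfrac14,-\tfrac14$ emerge correctly after the row/column rescalings.
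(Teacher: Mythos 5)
Your proposal is correct and follows essentially the same route as the paper: substituting the Hopf--Cole relations $v^i_x=\tfrac12u^iv^i$, $v^i_{xx}=\tfrac14y^iv^i$, $v^i_{xxx}=\tfrac18z^iv^i$ into the representation~\eqref{BE_RO_case12_coeffs_via_phi} of Theorem~\ref{TheoremOnNogoCaseofRedPosOfBurgersEq} and cancelling the common factor $v^1v^2v^3$, with the nondegeneracy $|\bar e,\bar u,\bar y|\ne0$ following from $W(\bar v)\ne0$ exactly as in the paper. Your extra bookkeeping (the identity $(W(\bar v))_x=|\bar v,\bar v_x,\bar v_{xxx}|$ and the explicit column rescalings) just makes explicit what the paper leaves as a routine substitution.
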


\begin{proof}
The connection between solutions of the heat equation and the Burgers equation
via the Hopf--Cole transformation $2v^i_x/v^i=u^i$ gives the expressions
\begin{equation*}
\displaystyle\frac{v^i_{xx}}{v^i}=\frac12u^i_x+\frac14(u^i)^2,
\quad
\displaystyle\frac{v^i_{xxx}}{v^i}=\frac34u^iu^i_x+\frac18(u^i)^3+\frac12u^i_{xx}, \quad i=1,2,3.
\end{equation*}
The substitution of these expressions into~\eqref{BE_RO_case12_coeffs_via_phi} proves the corollary.
Note that the determinant $|\bar e,\bar u,\bar y|$ is nonvanishing
as the Wronskian~$W(\bar v)$ is the same.
\end{proof}

\begin{corollary}\label{CorollaryOnBijectionBetweenSolutionsAndRedOpsInNogoCaseForBurgersEq}
The representations~\eqref{BE_RO_case12_coeffs_via_phi} and~\eqref{EqQinvSolutions12OfBE}, where $n=3$,
explicitly define the one-to-one correspondence
between reduction operators of the form~\eqref{QforCase12} and families of solutions of the Burgers equation that
are invariant with respect to these operators.
\end{corollary}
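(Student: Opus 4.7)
The plan is to read the corollary as a clean packaging of Theorem~\ref{TheoremOnNogoCaseofRedPosOfBurgersEq}. I would need to verify three things: (a) the assignment sending $Q$ to the family~\eqref{EqQinvSolutions12OfBE} is a well-defined surjection from operators of the form~\eqref{QforCase12} onto the corresponding families of $Q$-invariant solutions; (b) the reverse assignment via~\eqref{BE_RO_case12_coeffs_via_phi} is well-defined, independently of the choice of basis of the associated space $V$; and (c) the two assignments are mutually inverse.

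For (a), I would reuse the content of the proof of Theorem~\ref{TheoremOnNogoCaseofRedPosOfBurgersEq}: the Hopf--Cole transformation $u=2v_x/v$ maps the combined system $\{L[u]=0,\ L[u]+Q[u]=0\}$ bijectively onto the three-dimensional space $V$ of solutions of~\eqref{BE_case12_3OLDE}. Hence every $Q$-invariant solution of~\eqref{BE} arises as $u=2v_x/v$ with $v\in V$, and projectively equivalent choices of $v$ yield the same $u$; this gives precisely the family~\eqref{EqQinvSolutions12OfBE} with $n=3$ and the correct two essential parameters.

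For (b), a change of basis $(v^1,v^2,v^3)\mapsto A(v^1,v^2,v^3)^\top$ with $\det A\ne 0$ multiplies every Wronskian appearing in~\eqref{BE_RO_case12_coeffs_via_phi} by the common factor $\det A$, so the three fractions, and hence the coefficients $(\xi^0,\eta^1,\eta^0)$, depend only on the span $V$. Conversely, starting from an invariant family I would invert the Hopf--Cole map on three projectively independent members to recover a basis of $V$; to make this single-valued I must observe that on heat-equation solutions the only ambiguity in inverting Hopf--Cole is an overall constant factor (multiplication of $v$ by a function of $t$ alone preserves $v_t+v_{xx}=0$ only when that function is constant). Feeding the resulting $V$ back into~\eqref{BE_RO_case12_coeffs_via_phi} must return the original operator, since by the proof of Theorem~\ref{TheoremOnNogoCaseofRedPosOfBurgersEq} the triple $(\xi^0,\eta^1,\eta^0)$ is the unique set of coefficients of the third-order linear ODE in $x$ in~\eqref{BE_case12_3OLDE} whose $x$-solution space equals $V$. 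Mutual inversion then closes the argument.

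The main subtle points I anticipate are the basis-independence of~\eqref{BE_RO_case12_coeffs_via_phi}, which requires the homogeneous Wronskian rescaling noted above, and the reconstruction of $V$ from an invariant family, which depends on pinning down the kernel of Hopf--Cole restricted to heat-equation solutions. Everything else is an immediate consequence of Theorem~\ref{TheoremOnNogoCaseofRedPosOfBurgersEq} and requires no further computation.
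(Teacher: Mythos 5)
Your proposal is correct and follows essentially the same route as the paper: the paper states this corollary without a separate proof, regarding it as immediate from the proof of Theorem~\ref{TheoremOnNogoCaseofRedPosOfBurgersEq}, where the correspondence $Q\leftrightarrow V$ (the three-dimensional solution space of~\eqref{BE_case12_3OLDE}) and the unique recovery of $(\xi^0,\eta^1,\eta^0)$ by Cramer's rule are established. The details you add — invariance of the ratios in~\eqref{BE_RO_case12_coeffs_via_phi} under a basis change by the common factor $\det A$, and the fact that the Hopf--Cole map on heat-equation solutions is injective up to constant multiples — are exactly the implicit well-definedness checks, the latter being the same argument the paper uses when showing $D_x\omega\ne0$.
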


It looks very difficult to explicitly construct an ansatz for~$u$
by the direct integration of the invariant surface condition
that corresponds to an operator of the form~\eqref{QforCase12}
in the case of an arbitrary solution of system~\eqref{BE_RO_desystem_case12}.
Even for a simple solution of system~\eqref{BE_RO_desystem_case12},
carrying out the corresponding reduction of the Burgers equation~\eqref{BE} is not a trivial problem.
See, e.g., \cite{arri93a,pucc92a}, where a few such reductions were implemented.
A better way for using an operator of the form~\eqref{QforCase12} for reducing the Burgers equation
is to consider, instead of~$Q$, the generalized vector field $\hat Q=(L[u]+Q[u])\partial_u$
which is equivalent to the evolutionary representative $Q[u]\partial_u$ of~$Q$ and
is a generalized conditional symmetry of~\eqref{BE},
cf.~\cite[Proposition~4]{kunz11a}.
In fact, the reduction of equation~\eqref{BE} with respect to the generalized vector field $\hat Q$
is somehow presented in the proof of Theorem~\ref{TheoremOnNogoCaseofRedPosOfBurgersEq},
and Corollary~\ref{CorollaryOnBijectionBetweenSolutionsAndRedOpsInNogoCaseForBurgersEq}
exhaustively describes the family of $Q$-invariant solutions of~\eqref{BE}.

At the same time, knowing the representation~\eqref{EqQinvSolutions12OfBE}, where $n=3$,
for $Q$-invariant solutions of~\eqref{BE} allows us to easily construct an ansatz for~$u$
associated with the operator~$Q$ and then reduce the Burgers equation~\eqref{BE}
using this ansatz.
Setting $c_1=1$ and $c_2=0$ (resp.\ $c_1=0$ and $c_2=1$) and assuming that $c_3$ is an arbitrary constant,
we derive two integrals of the equation $Q[u]=0$,
\[
\zeta =\frac{v^1u-2v^1_x}{v^3u-2v^3_x},\quad
\omega=\frac{v^2u-2v^2_x}{v^3u-2v^3_x},
\]
which are correctly defined for $u\ne2v^3_x/v^3$
(we can always assume this condition to be satisfied for a specific solution~$u$
up to renumbering the functions~$v^1$, $v^2$ and $v^3$).
Therefore, the general solution of the equation $Q[u]=0$ is implicitly represented in the form $F(\zeta,\omega)=0$,
where $F$ is an arbitrary nonconstant function of its arguments.
Up to the permutation of~$\zeta$ and~$\omega$, the derivative $F_\zeta$ can be assumed nonvanishing.
Then the equality $F(\zeta,\omega)=0$ implies the implicit ansatz \[\zeta=\varphi(\omega)\]
for the unknown function $u=u(t,x)$,
where $\zeta$ and~$\omega$ are the new dependent and independent variables, respectively.
The standard way for deriving the corresponding reduced equation via
the computation of the expressions for derivatives $u_t$, $u_x$ and $u_{xx}$ implied by the ansatz
and the subsequent substitution of the expressions into~\eqref{BE} is too cumbersome.
Instead of this, we act on the ansatz $\zeta=\varphi(\omega)$ by the operator $D_t+D_{xx}$
and then make the substitutions $u_t+u_{xx}=-uu_x$ and $v^i_t+v^i_{xx}=0$, $i=1,2,3$.
As a result, we derive the equation $\varphi_{\omega\omega}D_x\omega=0$.

The differential function~$D_x\omega$ does not vanish for solutions of~\eqref{BE}
which are implicitly represented in the form $\zeta=\varphi(\omega)$ for some smooth function~$\varphi$ of~$\omega$.
Indeed, suppose that $D_x\omega=0$ (i.e., $\omega=\omega^0(t)$) for such a solution $u=u^0(t,x)$.
Then we have $u^0=2(v^2_x-\omega^0v^3_x)/(v^2-\omega^0v^3)$,
which implies that for some nonvanishing smooth function~$\chi$ of~$t$
the expression $(v^2+\omega^0v^3)\chi$ gives a solution of the linear heat equation $v_t+v_{xx}=0$.
As the functions~$v^2$ and~$v^3$ are linearly independent solutions of the same equation,
this is possible only if $\omega^0$ is a constant.
Therefore, the substitution $u=u^0(t,x)$ into~$\zeta$ results in the constant value $\zeta^0=\varphi(\omega^0)$
and hence $u^0=2(v^1_x-\zeta^0v^3_x)/(v^1-\zeta^0v^3)$. Comparing the two expressions for the solution $u=u^0(t,x)$,
we conclude that the functions~$v^1$, $v^2$ and~$v^3$ are linearly dependent and thus arrive at a contradiction.

As $D_x\omega\ne0$, the reduced equation takes the form \[\varphi_{\omega\omega}=0.\]
Its general solution is $\varphi=\tilde c_1\omega+\tilde c_2$,
which completely agrees with the representation~\eqref{EqQinvSolutions12OfBE} and the ansatz $\zeta=\varphi(\omega)$.
The above reduced equation is obtained for an arbitrary operator~$Q$ of the form~\eqref{QforCase12} and
it is much simpler than particular reduced equations derived in~\cite{arri93a,pucc92a}.
This fact is explained by that the ansatz $\zeta=\varphi(\omega)$,
whose construction is based on the representation~\eqref{EqQinvSolutions12OfBE}
for $Q$-invariant solutions of the Burgers equation~\eqref{BE},
is in better agreement with the structure of this equation than particular ansatzes given in~\cite{arri93a,pucc92a}.

\section{Conclusion}

The aim of this paper is to arrange, enhance and complete the description of the nonclassical reductions
of the Burgers equation, including Lie reductions.
Although this problem had been considered in a number of papers, certain of its aspects needed additional investigation.

The set of reduction operators of the Burgers equation is naturally partitioned into two subsets,
which consist of singular and regular reduction operators, respectively.
Basic properties of singular reduction operators, whose coefficients of~$\partial_t$ vanish,
are quite common for (1+1)-dimensional evolution equations and properly formulated in no-go terms,
and the main property is the existence of a bijection
between equivalence classes of singular reduction operators and one-parameter families of solutions
which are different up to re-parameterization.
At the same time, the subset of regular reduction operators,
whose coefficients of~$\partial_t$ do not vanish, is of specific structure.
The representatives of equivalence classes of such operators, whose coefficients of~$\partial_t$ equal one,
are partitioned into three sets, namely, the unique operator~$\partial_t+u\partial_x$ with~$\xi_u=1$,
the family of operators with~$\xi_u=0$, each of which is equivalent to a Lie symmetry operator, and
the family of operators with~$\xi_u=-\frac12$, which have the form~\eqref{QforCase12}
with the coefficients~$\xi^0$, $\eta^1$ and~$\eta^0$ satisfying the system~\eqref{BE_RO_desystem_case12}.

We present a new optimized proof of the no-go theorem on the system~\eqref{BE_RO_desystem_case12} of determining equations
for reduction operators with $\tau=1$ and $\xi_u=-\tfrac12$, which is directly based on properties of such operators.
As a consequence of the theorem, it is detected
that the coefficients of reduction operators in this case
admit the representation in terms of solutions of the uncoupled system formed by three copies of the Burgers equation.
Any Lie reduction of the Burgers equation proves to be equivalent via the Hopf--Cole transformation
to a parameterized family of Lie reductions of the linear heat equation.
We also carry out all possible nonclassical reductions of the Burgers equation to single ordinary differential equations.
Regular reduction operators with the coefficient~$\tau$ gauged to one give
purely nonclassical reductions of the Burgers equation only if $\xi_u\in\{1,-\tfrac12\}$.
It is obvious that all corresponding ansatzes are necessarily implicit and hence they cannot be constructed
using the direct method by Clarkson and Kruskal.
See related discussions on the connection between nonclassical symmetries and the direct method, e.g.,
in~\cite{clar89a,pucc92a}.

Finding reduction operators for the Burgers equation, one faces two kinds of no-go cases.

The first kind is given by singular reduction operators and is in fact related
to lowering the equation order to one in the course of reduction.
This is why similar no-go results are true for any (1+1)-dimensional partial differential equation
possessing a family of reduction operators of singularity co-order one
which is parameterized by an arbitrary smooth function of all independent and dependent variables~\cite{kunz08b}.
For each such family, the system of determining equations consists of
a single partial differential equation on the function parameterizing the family,
and this equation is equivalent, in a certain sense, to the original equation,
where the variable tuple is implicitly augmented with an additional parametric variable.
The above results can even be extended to multidimensional partial differential equations~\cite{boyk12a}.

The second kind given by regular reduction operators with $\tau=1$ and $\xi_u=-\tfrac12$ is more specific
and definitely related to the fact that the Burgers equation is linearized by the Hopf--Cole transformation
to the linear heat equation.
The corresponding system of determining equations is a system of three (1+1)-dimensional evolution equations,
which is reduced by differential substitutions to the uncoupled system of three copies of the linear heat equation
as well as to the uncoupled system of three copies of the Burgers equation.
Similar no-go results are known only for linear (1+1)-dimensional evolution equations
of the second order~\cite{fush92a,fush94a,popo95a,popo08b}.
It looks possible to extend these results to the entire class of generalized Burgers equations
which are linearized by the Hopf--Cole transformation,
and the optimized proof of Theorem~\ref{TheoremOnNogoCaseofRedPosOfBurgersEq} creates a significant prerequisite for this.
The question whether there exist no-go cases of other kinds
related to regular reduction operators of single evolution equations is still open.

\subsection*{Acknowledgments}

The authors thank Vyacheslav Boyko for useful discussions.
The research of ROP was supported by the Austrian Science Fund (FWF), projects P20632 and Y237.

\end{document}